\title{Aggregate Skyline Join Queries: Skylines with Aggregate Operations over
Multiple Relations}
\author{
\begin{tabular}{cc}
	Arnab Bhattacharya & B. Palvali Teja \\
	\url{arnabb@iitk.ac.in} & \url{palvali.teja@gmail.com} \\
	{Dept. of Computer Science and Engineering,} & {Amazon Development Limited,} \\
	{Indian Institute of Technology, Kanpur,} & {Hyderabad,} \\
	{Kanpur, India} & {India} \\
\end{tabular}
}
\date{}
\newtheorem{definition}{Definition}
\newtheorem{lemma}{Lemma}
\newtheorem{theorem}{Theorem}
\newcommand{\figwidth}{0.65\columnwidth}
\newcommand{\comment}[1]{}
\begin{document}

\maketitle

\begin{abstract}
	The multi-criteria decision making, which is possible with the advent of
	skyline queries, has been applied in many areas.  Though most of the
	existing research is concerned with only a single relation, several real
	world applications require finding the skyline set of records over multiple
	relations.  Consequently, the join operation over skylines where the
	preferences are local to each relation, has been proposed.  In many of those
	cases, however, the join often involves performing aggregate operations
	among some of the attributes from the different relations.  In this paper,
	we introduce such queries as ``aggregate skyline join queries''.  Since the
	na\"ive algorithm is impractical, we propose three algorithms to efficiently
	process such queries.  The algorithms utilize certain properties of skyline
	sets, and processes the skylines as much as possible locally before
	computing the join.  Experiments with real and synthetic datasets exhibit
	the practicality and scalability of the algorithms with respect to the
	cardinality and dimensionality of the relations.\\
\end{abstract}

\noindent
\textbf{Keywords: }Skyline, Join, Preferences, Aggregation

\section{Introduction}
\label{sec:intro}

The skyline operator, introduced by B\"orzs\"onyi et al.~\cite{656550},
addresses the problem of multi-criteria decision making where there is no clear
preference function over the attributes, and the user wants an overall big
picture of which objects dominate (equivalently, are better than) other objects
in terms of preferences set by her.  The classic example involves choosing
hotels that are good in terms of both price and distance to beach.  The
\emph{skyline} set of hotels discard other hotels that are both dearer and
farther than a skyline hotel.

For every attribute, there is a preference function that states which objects
dominate over other objects.  For example, the preference function for both
price and distance to beach is $<$, i.e., a hotel with a lower price \emph{and}
at a closer distance to the beach than another hotel will dominate the second
one.  Consequently, the second hotel is never going to be preferred, and does
not require any further consideration.  The skyline query returns all such
objects that are \emph{not} dominated by any other object.  The importance and
usefulness of skyline queries has provoked the commercial database management
systems to incorporate these queries into existing systems~\cite{1129924}.

In real applications, however, there often exists a scenario when a single
relation is not sufficient for the application, and the skyline needs to be
computed over multiple relations~\cite{4354442}.  For example, consider a flight
database.  A person traveling from city A to city B may use stopovers, but may
still be interested in flights that are cheaper, have a less overall journey
time, better ratings and more amenities.  In this case, a single relation
specifying all direct flights from A to B may not suffice or may not even exist.
The join of multiple relations consisting of flights starting from A and those
ending at B needs to be processed before computing the preferences.

The above problem becomes even more complex if the person is interested in the
travel plan that optimizes both on the \emph{total} cost as well as the
\emph{total} journey time for the two flights (other than the ratings and
amenities of each airline).  In essence, the skyline now needs to be computed on
attributes that have been \emph{aggregated} from multiple relations in addition
to attributes whose preferences are \emph{local} within each relation.  The
common aggregate operations are sum, average, minimum, maximum, etc.

Table~\ref{tab:example} shows an example.  The first table lists all flights
from city A and the second one lists all flights to city B.  A join of the two
tables with the destination of the first table equal to the source of the second
table and departure time more than arrival time will yield all flights from A to
B with one stopover.  As shown in Table~\ref{tab:example}(c), it also contains
the total cost, total journey time, ratings and amenities of the two flights.
The user wants a skyline on this joined relation using these attributes.  While
the total cost and total journey time are aggregated attributes, the ratings and
amenities are local to each table.  In this example, flight (13, 23) is
dominated by flight (11, 21) in all the attributes, and hence, will not be
preferred.  On the other hand, flight (11, 21) is not dominated by any other
flight and, therefore, is part of the skyline set that the user wants to examine
it more thoroughly.  We name the above queries that retrieve skylines over
aggregates of attributes joined using multiple relations as \textsc{Aggregate
Skyline Join Queries (ASJQ)}.

The above query can be specified in SQL as:
\begin{verbatim}
SELECT f1.fno, f2.fno,
  f1.dst, f2.src, f1.arr, f2.dep,
  f1.rtg, f2.rtg, f1.amn, f2.amn,
  cost as f1.cost + f2.cost,
  duration as f1.duration + f2.duration
FROM FlightsA as f1, FlightsB as f2
WHERE f1.dst = f2.src AND
      f1.arr < f2.dep AND
  SKYLINE of cost min, duration min,
    f1.rtg max, f2.rtg max,
    f1.amn max, f2.amn max
\end{verbatim}
%
Thus, database systems that have the \emph{skyline} operator built into
them~\cite{skylinepostgres} can easily allow the users to run such queries.

\begin{table*}
	\begin{center}
		\begin{tabular}{cc}
			\hspace*{-3mm}
			\begin{tabular}{|c|c|c|c|c|c|c|c|}
				\hline
				& & \multicolumn{2}{c|}{Join (H)} & \multicolumn{2}{c|}{Aggregate (G)} &
				\multicolumn{2}{c|}{Local (L)} \\
				\cline{3-8}
				fno & dep & arr & dst & duration & cost & amn & rtg \\
				\hline
				11 & 06:30 & 08:40 & C & 2h 10m & 162 & 5 & 4 \\
				12 & 07:00 & 09:00 & E & 2h 00m & 166 & 4 & 5 \\
				14 & 08:05 & 10:00 & E & 1h 55m & 140 & 3 & 4 \\
				15 & 09:50 & 10:40 & C & 1h 40m & 270 & 3 & 2 \\
				13 & 12:00 & 13:50 & C & 1h 50m & 173 & 4 & 3 \\
				16 & 16:00 & 17:30 & D & 1h 30m & 230 & 3 & 3 \\
				17 & 17:00 & 20:20 & C & 3h 20m & 183 & 4 & 3 \\
				\hline
				\multicolumn{8}{c}{(a) Flights from city A (FlightsA)}
			\end{tabular}
			& \hspace*{-4mm}
			\begin{tabular}{|c|c|c|c|c|c|c|c|}
				\hline
				& \multicolumn{2}{c|}{Join (H)} & & \multicolumn{2}{c|}{Aggregate (G)} &
				\multicolumn{2}{c|}{Local (L)} \\
				\cline{2-3} \cline{5-8}
				fno & src & dep & arr & duration & cost & amn & rtg \\
				\hline
				21 & C & 09:50 & 12:00 & 2h 10m & 162 & 5 & 4 \\
				26 & C & 16:00 & 18:49 & 2h 49m & 160 & 2 & 3 \\
				23 & C & 16:00 & 18:45 & 2h 45m & 160 & 4 & 4 \\
				25 & D & 16:00 & 17:49 & 1h 49m & 220 & 3 & 4 \\
				22 & D & 17:00 & 19:00 & 2h 00m & 166 & 4 & 5 \\
				27 & E & 20:00 & 21:46 & 1h 46m & 200 & 3 & 3 \\
				24 & E & 20:00 & 21:30 & 1h 30m & 160 & 4 & 3 \\
				\hline
				\multicolumn{8}{c}{(b) Flights to city B (FlightsB)}
			\end{tabular}
			\\\\
			\multicolumn{2}{c}{
			\hspace*{-3mm}
			\begin{tabular}{|c|c|c|c|c|c|c|c|c|c|c|c|c|}
				\hline
				f1.fno & f2.fno & f1.dst & f2.src & f1.arr & f2.dep & f1.amn & f2.amn 
				& f1.rtg & f2.rtg & cost & duration & Skyline \\
				\hline
				11 & 21 & C & C & 08:40 & 09:50 & 5 & 5 & 4 & 4 & 324 & 4h 20m & Yes \\
				11 & 23 & C & C & 08:40 & 16:00 & 5 & 4 & 4 & 4 & 322 & 4h 55m & Yes \\
				13 & 23 & C & C & 13:50 & 16:00 & 4 & 4 & 3 & 4 & 333 & 4h 35m & No \\
				15 & 23 & C & C & 10:40 & 16:00 & 3 & 4 & 2 & 4 & 430 & 4h 25m & No \\
				12 & 24 & E & E & 09:00 & 20:00 & 4 & 4 & 5 & 3 & 326 & 3h 30m & Yes \\
				14 & 24 & E & E & 10:00 & 20:00 & 3 & 4 & 4 & 3 & 300 & 3h 25m & Yes \\
				\hline
				\multicolumn{11}{c}{(c) Part of the joined relation (FlightsA $\Join$ FlightsB)}
			\end{tabular}
			}
		\end{tabular}
	\end{center}
	\caption{Example of an Aggregate Skyline Join Query (ASJQ).}
	\label{tab:example}
\end{table*}

The preferences in the general skyline join problems are local to each relation,
and hence, the skyline operations can be performed before the
join~\cite{4354442}.  For ASJQ queries, however, the skyline is computed over
the aggregate values of attributes from multiple relations.  This leads to
performance degradation, since, the cardinality of joined relations is in
general large, and the skylines cannot be processed unless the aggregate values
have been computed.  The aggregation function must be \emph{monotonic}, i.e., if
values $s$ and $u$ are preferred over values $t$ and $v$ respectively, the
aggregated value of $s$ and $u$ must be preferred over the aggregated value of
$t$ and $v$.  The aggregation operation is reminiscent of the problem of finding
top-$k$ objects using multiple sources~\cite{375567}.  However, the ASJQ queries
differ significantly by retrieving the skylines in which the aggregate values
are only part of the attribute set.  ASJQ queries, thus, involve three separate
problems---skyline queries, join and aggregation from multiple
sources---together, and highlights the connections among them.

The ASJQ queries are pertinent in many application domains.  For example, the
situation with flights described above is quite a routine task for tour planners
and traveling salespersons.  Another interesting application is in the cricket
leagues.  Clubs want to buy both good batsmen and good bowlers.  Batsmen have
attributes such as average, cost and rating.  Similarly, bowlers have strike
rate, cost and rating.  The clubs optimize their chances of winning by
considering options from the skyline set of batsman-bowler combinations with
preferences for high average, high strike rate, low total cost and high total
rating.  In the same way, to choose an optimal combination of digital camera and
a compatible memory card from a products database, it is necessary to join the
individual tuples containing the attributes of a camera and those of a memory
card on an attribute such as compatible memory card type (e.g., SD, XD, CF
etc.), and optimize an aggregate attribute such as total cost, in addition to
local attributes such as optical zoom (for camera) and storage capacity (for
memory card).  ASJQ queries can also be applied in the context of multimedia
data retrieval~\cite{375567}, geographic information systems~\cite{263689},
dynamic resource allocation on the grid~\cite{823372}, e-commerce~\cite{375739},
etc.

The na\"ive method of implementing ASJQ involves three steps: (i)~performing the
join operation over the relations, (ii)~performing the aggregate operations on
the attributes of multiple relations, and (iii)~performing the skyline query on
the joined relation.  For large relations, this demands impractical
computational costs.  By intuition, one can observe that non-skyline points in
each relation cannot appear in the final result set.  Hence, performing a
skyline operation on each relation before joining reduces the size of the
relations to be joined and, thus, reduces the processing cost.

To reduce the costs further, we designed three algorithms.  The first approach,
\emph{Multiple Skyline Computations (MSC)} algorithm, utilizes the fact that
certain joins of non-skyline sets from the individual relations need not be
tested for skyline criteria, and can be pruned.  The \emph{Dominator-based}
algorithm and the \emph{Iterative} algorithm improve on the MSC approach by
pruning records even from the skyline sets of individual relations before they
are joined, and are thus more efficient.

Our contributions in this paper are:
\begin{enumerate}
	\item We define a novel query ``Aggregate Skyline Join Query''.
	\item We propose three algorithms that efficiently solves them.
	\item We thoroughly investigate the effects of different parameters on the
		algorithms in terms of computational costs both analytically and through
		experiments.
\end{enumerate}

The rest of the paper is organized as follows.   The Aggregate Skyline Join
Query (ASJQ) is formally defined in Section~\ref{sec:statement}.  A brief
literature review is presented in Section~\ref{sec:work}.
Section~\ref{sec:algo} proposes and analyzes three algorithms that efficiently
solves the ASJQ queries.  Section~\ref{sec:exp} describes the experimental
results before Section~\ref{sec:concl} concludes.

\section{Problem Statement}
\label{sec:statement}

We begin by recapitulating the definition of skyline queries for a relation.
Certain attributes of the relation participate in the skyline and are called the
skyline attributes.  For each skyline attribute, \emph{preference functions} are
specified as part of the skyline query.  In a relation $R$, a tuple $r_i =
(r_{i_1}, r_{i_2}, \dots, r_{i_k})$ \emph{dominates} another tuple $r_j =
(r_{j_1}, r_{j_2}, \dots, r_{j_k})$, denoted by $r_i \succ r_j$, if for all
skyline attributes $c = \{s_1, \dots, s_{k'}\} \subseteq \{1, \dots, k\}$,
$r_{i_c}$ is preferred over or equal to $r_{j_c}$, and there is at least one
attribute $d$ where $r_{i_d}$ is strictly preferred over $r_{j_d}$.  A tuple $r$
is in the \emph{skyline} set of $R$ if there does not exist any tuple $s \in R$
that dominates $r$.

For our problem, i.e., ASJQ, the attributes of a relation are categorized into
three types: (i)~\emph{local (L)}: attributes on which skyline preferences are
applied locally to each relation, (ii)~\emph{aggregate (G)}: attributes on which
skyline preferences are applied \emph{after} the aggregate operations are
performed during join, (iii)~\emph{join (H)}: attributes on which no skyline
preferences are specified, but are instead used for joining the two relations.

\begin{definition}[Local attributes]
	The attributes of a relation on which preferences are applied for the
	purposes of skyline computation, but no aggregate operation with an
	attribute from the other relation is performed, are denoted as \emph{local
	attributes}.
\end{definition}

\begin{definition}[Aggregate attributes]
	The attributes of a relation, on which an aggregate operation is performed
	with another attribute from the other relation, and then preferences are
	applied on the aggregated value for skyline computation, are denoted as
	\emph{aggregate attributes}.
\end{definition}

\begin{definition}[Join attributes]
	The attributes of a relation, on which no skyline preferences are specified,
	but are used to specify the join conditions between the two relations, are
	denoted as \emph{join attributes}.
\end{definition}

Denoting the local attributes by $l$, the aggregate attributes by $g$, and the
join attributes by $h$, the two relations can be represented as:
\begin{align}
	R_1 &= \{ h_{1_1}, \dots, h_{1_j}, l_{1_1}, \dots, l_{1_{m_1}},
	g_{1_1}, \dots, g_{1_n} \} \nonumber \\
	R_2 &= \{ h_{2_1}, \dots, h_{2_j}, l_{2_1}, \dots, l_{2_{m_2}},
	g_{2_1}, \dots, g_{2_n} \} \nonumber
\end{align}
where $R_1$ and $R_2$ has $m_1$ and $m_2$ local attributes respectively, and $n$
aggregate attributes.  The join condition is a conjunction of $j$ comparisons
between the corresponding $j$ attributes ($h_{ij}$) of $A$ and $B$.  In this
paper, we assume that join attributes are separate from local and aggregate
attributes.  The final joined relation $R = R_1 \Join R_2$ is
\begin{align}
	R = \{ & h_{1_1}, \dots, h_{1_j}, h_{2_1}, \dots, h_{2_j}, \nonumber \\
	& l_{1_1}, \dots, l_{1_{m_1}}, l_{2_1}, \dots, l_{2_{m_2}}, \nonumber \\
	& g_{1_1} \oplus_1 g_{2_1}, \dots, g_{1_n} \oplus_n g_{2_n} \} \nonumber
\end{align}
where $\oplus_i$, etc. denote the join condition.

For the example in Table~\ref{tab:example}, the local attributes are {\tt amn}
and {\tt rtg}, the aggregate attributes are {\tt cost} and {\tt duration},
and the join attributes are {\tt dst} and {\tt arr} for FlightsA, and {\tt src}
and {\tt dep} for FlightsB.

The \textsc{Aggregate Skyline Join Query (ASJQ)} is defined as:

\begin{definition}[Aggregate Skyline Join Queries (ASJQ)]
	The ASJQ queries retrieve the skyline set from the joined relation according
	to the preference functions of its $m_1 + m_2$ local and $n$ aggregate
	attributes.
\end{definition}

Dominance relationships between records can be defined based on the
\emph{attributes} on which a record dominates other records.  A tuple $r$ in
relation $R_i$ \emph{fully dominates} another tuple $s \in R_i$ if $r$ dominates
$s$ in both the local and the aggregate attributes of $R_i$.  If $r$ dominates
$s$ only in the local attributes, it is said to \emph{locally dominate} $s$.

The above definitions assume that whenever a tuple $t' = u \Join v'$ exists in
the final relation, the tuple $t = u \Join v$, where $v' \succ v$, also exists.
However, the join attributes of $v'$ and $v$ may be such that only $v'$
satisfies the join condition with $u$, but $v$ does not.  Consider flight 15 in
Table~\ref{tab:example}. It is dominated by flight 16 in the local attributes.
However, since they have different destinations, 15 can join with other flights
originating from $C$ (e.g., 23) which flight 16 cannot.  Hence, it must not be
considered to be dominated by flight 16.  In such cases, $t'$ may also exist as
a skyline in the final result as there is no $t$ to dominate it.  The problem is
that the local dominance did not take into account the join attributes.

In order to handle this, the join attributes must be taken into account when
full and local dominance relationships are defined.  Suppose, the join condition
that two join attributes $a$ from $A$ and $b$ from $B$ participate in is $A.a
\odot B.b$ where $\odot$ may be any of the following five comparison operators:
$=, <, \leq, >, \geq$ (we do not consider other operations in this paper).

\begin{table}[t]
	\centering
	\begin{tabular}{|c||c|c|}
		\hline
		Join & \multirow{2}{*}{$u \in A$ $\succ$ $u' \in A$ if} &
		\multirow{2}{*}{$v \in B$ $\succ$ $v' \in B$ if} \\
		condition & & \\
		\hline
		\hline
		$A.a = B.b$ & $u.a = u'.a$ & $v.b = v'.b$ \\
		$A.a < B.b$ & $u.a \leq u'.a$ & $v.b \geq v'.b$ \\
		$A.a \leq B.b$ & $u.a \leq u'.a$ & $v.b \geq v'.b$ \\
		$A.a > B.b$ & $u.a \geq u'.a$ & $v.b \leq v'.b$ \\
		$A.a \geq B.b$ & $u.a \geq u'.a$ & $v.b \leq v'.b$ \\
		\hline
	\end{tabular}
	\caption{Converting join conditions to skyline preferences.}
	\label{tab:joinsky}
\end{table}

Now, consider the tuple $u' \in A$. If it is dominated by tuple $u \in A$, then
it must be ensured that whenever $u'$ joins with $v \in B$, $u$ must also
satisfy the joining condition, i.e., if $u' \Join v$ is true, then $u \Join v$
must be true as well.  For example, if $\odot$ denotes $=$, then this translates
to $u.a = u'.a$ (both being equal to $v.b$); if $\odot$ denotes $<$, this
translates to $u.a < u'.a$, and similarly for the rest.  (The comparison
conditions are reversed for relation $B$.)  This condition can be incorporated
in the skyline finding routines as follows.

The join attribute is \emph{also} considered as a skyline attribute with the
preference function set appropriately as summarized in Table~\ref{tab:joinsky}.
This automatically ensures that whenever a tuple $u'$ is dominated by $u$, $u'$
can be discarded as the join of $u$ with $v$ can always be formed which will
ultimately dominate the join of $u'$ with $v$.

Based on the above discussion, the definitions of dominance relationships are
modified as follows.

\begin{definition}[Full dominance]
	A tuple $r$ in relation $R$ \emph{fully dominates} a tuple $s$ if $r$
	dominates $s$ in local, aggregate and join attributes of $R$.  
\end{definition}

\begin{definition}[Local dominance]
	A tuple $r$ in relation $R$ \emph{locally dominates} a tuple $s$ if $r$
	dominates $s$ in local and join attributes of $R$.  
\end{definition}

henceforth, whenever we mention local or aggregate attributes in the context of
dominance, we assume that the join attributes are incorporated within them.

Note that full dominance implies local dominance, but not vice versa.  The
corresponding definitions of \emph{full dominator} and \emph{local dominator}
are also specified.  Using these definitions, two kinds of skyline sets are also
defined.  A tuple $r$ in relation $R$ is in the \emph{full skyline} set if no
tuple in $R$ fully dominates $r$, and it is in the \emph{local skyline} set if
no tuple in $R$ locally dominates it.  A tuple that is in the local skyline set
is also in the full skyline set, but not vice versa.

\section{Related Work}
\label{sec:work}

The maximum vector problem or Pareto curve~\cite{321910} in the field of
computational geometry has been imported to databases forming the skyline
query~\cite{656550}. After the first skyline algorithm proposed by Kung et
al.~\cite{321910}, there were many algorithms devised by exploring the
properties of skylines. Some representative non-indexed algorithms are
SFS~\cite{1260846}, LESS~\cite{1083622}. Using index structures, algorithms such
as NN~\cite{1287394} and BBS~\cite{872814} have been proposed.

In~\cite{4233322}, Jin et al. proposed the multi relational skyline operator.
They also designed algorithms to find such skylines over multiple relations.
In~\cite{4354442}, Sun et al. coined the term ``skyline join'' in the context of
distributed environments.  They extended SaLSa~\cite{1183674} and also proposed
an iterative algorithm that prunes the search space in each step.  ASJQ queries
differ in that it extends the skyline join proposed in~\cite{4233322} with
aggregate operations performed during the join.  This renders the use of the
existing techniques inapplicable as they work only on the local attributes.

There are various algorithms for joining such as nested-loop join, indexed
nested-loop join, merge-join and hash-join~\cite{korth}.  Nested-loop joins can
be used regardless of the join condition. The other join techniques are more
efficient, but can handle only simple join conditions, such as natural joins or
equi-joins.  Any of these join algorithms that is applicable for the given query
can be used with ASJQ algorithms.

\section{Algorithms}
\label{sec:algo}

In this section, we describe the various algorithms that have been designed to
process the ASJQ queries.  We start with the na\"ive one before moving on to the
more sophisticated algorithm that uses the \emph{multiple skyline computations
(MSC)} approach.  The last two algorithms---\emph{dominator-based} and
\emph{iterative}---improves upon the MSC approach.  For each algorithm, we also
provide an analysis of its computation cost.

The pseudocode of the algorithms assume the procedures for
\emph{computeFullSkyline}, \emph{computeLocalSkyline}, \emph{computeJoin}, and
\emph{aggregate} methods.  The algorithms for these methods are not shown, since
any efficient skyline or join algorithm can be plugged into these methods.  The
aggregate method simply computes the aggregate operations on the specified
attributes.  Even though the efficiency of the entire method depends on the
complexities of these algorithms, we have not experimented with them as the
focus of this paper is on processing the ASJQ part.

\subsection{Na\"ive Algorithm}
\label{subsec:naive}

\begin{algorithm}[t]
\caption{Na\"ive Algorithm}
\label{alg:naive}
\begin{algorithmic}[1]
\REQUIRE Relations $A, B$, preferences $p$, aggregate operations $a$
\ENSURE Aggregate skyline join relation $S$
\STATE $J \leftarrow$ computeJoin$(A,B)$
\STATE $R \leftarrow$ Aggregate$(J,a)$
\STATE $S \leftarrow$ computeFullSkyline$(R,p)$
\end{algorithmic}
\end{algorithm}

The na\"ive method of processing ASJQ queries is shown in
Algorithm~\ref{alg:naive}.  It computes the join of the two input relations and
applies the aggregate operations, before computing the skyline on the joined and
aggregated relation using the preferences.  There are two costs involved in this
algorithm, \emph{joining} cost and cost for the \emph{skyline} computation. The
cost of \emph{aggregation} is not included, because it can be done when two
tuples are joined, without any extra cost.

\subsubsection{Analysis}

We denote the cost of a skyline operation on a relation of $N$ tuples having $a$
attributes by $S(N, a)$.  The cost of a join operation on two relations of size
$N_1$ and $N_2$ is denoted by $J(N_1, N_2)$.  Since the aggregate operations are
done as part of the join, the cost of those operations are not taken into
account separately.  Rather, if $g$ attributes are aggregated, the cost of the
join is denoted by $J(N_1, N_2, g)$, by incorporating the parameter within it.

Assuming the relations $A$ and $B$ contain $N_A$ and $N_B$ tuples respectively
with $n$ aggregate attributes, the cost of joining and aggregation is $J(N_A,
N_B, n)$.  The joined relation contains \emph{at most} $N_A N_B$ tuples, each
having $m_1 + m_2 + n$ attributes, and therefore, the cost of skyline operation
is $S(N_A N_B, m_1 + m_2 + n)$.

When operating on large relations, the above costs are impractical.
However, an advantage of the algorithm, apart from being the simplest to
implement, is the fact that it is independent of the distribution of the data.

\subsection{Performing Skylines before Join}
\label{subsec:joinattr}

Processing ASJQ queries can be made more efficient by pushing the join operation
after the full skylines have been evaluated in each relation, thereby discarding
tuples that are \emph{fully dominated} by other tuples.  These records are
guaranteed not to exist in the ASJQ result set.

Denoting the full skyline sets in each relation by $A_0$ and $B_0$ respectively,
and the non-skyline sets by $A'_0$ and $B'_0$ respectively, i.e., $A'_0 = A -
A_0$ and $B'_0 = B - B_0$, the following theorem shows that any tuple formed by
joining a tuple from either $A'_0$ or $B'_0$ or both cannot be part of the final
skyline set.

\begin{table}[t]
	\centering
	\begin{tabular}{|c|c|c|c|}
		\hline
		\multicolumn{3}{|c|}{Set} & Flight numbers \\
		\hline
		\hline
		\multirow{3}{*}{$A_0$} & \multicolumn{2}{c|}{$A_1$} & 11, 12 \\
		\cline{2-4}
		& \multirow{2}{*}{$A'_1$} & $A_2$ & 13, 14 \\
		\cline{3-4}
		& & $A'_2$ & 15, 16 \\
		\hline
		\multicolumn{3}{|c|}{$A'_0$} & 17 \\
		\hline
	\end{tabular}
	\caption{Categorization of relation FlightsA from Table~\ref{tab:example}.}
	\label{tab:example_cata}
\end{table}

\begin{theorem}
	\label{thrm:full}
	A tuple formed by joining a tuple that is not a full skyline in the
	individual relation never exists in the final skyline set.
\end{theorem}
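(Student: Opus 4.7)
The plan is a direct argument by contrapositive, using the monotonicity assumption on the aggregation operator and the encoding of join conditions as skyline preferences described in Table~\ref{tab:joinsky}. Without loss of generality I would show the statement for a tuple $u \in A'_0$; the symmetric case $v \in B'_0$, and hence the case where both join operands are non-skyline, follow from exactly the same argument on the other relation (and, in the double case, by applying both in succession).

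First, I would unpack what it means for $u \in A'_0$. Since $u$ is not in the full skyline of $A$, there exists $u^\star \in A$ that fully dominates $u$, where ``fully'' now incorporates the join attributes as skyline attributes via the preference-function translation given in Table~\ref{tab:joinsky}. The key consequence of this translation is the following join-preservation property: for every tuple $v \in B$ for which $u \Join v$ satisfies the join condition, the tuple $u^\star \Join v$ also satisfies it. I would verify this by running through the five comparison operators in the table; each case is immediate (e.g.\ for $A.a = B.b$ we have $u^\star.a = u.a = v.b$; for $A.a < B.b$ we have $u^\star.a \leq u.a < v.b$; etc.). Thus whenever $u \Join v$ appears in the joined relation, so does $u^\star \Join v$.

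Next I would compare $u^\star \Join v$ against $u \Join v$ attribute by attribute on the skyline attributes of the joined relation $R$. On the local attributes inherited from $A$, dominance holds because $u^\star \succ u$ on those attributes; on the local attributes inherited from $B$, the two tuples agree because they share $v$; on each aggregate attribute $g_{1_i} \oplus_i g_{2_i}$, dominance is obtained from the monotonicity of $\oplus_i$ together with $u^\star.g_{1_i}$ being preferred over $u.g_{1_i}$ and $v.g_{2_i} = v.g_{2_i}$. Since $u^\star$ strictly dominates $u$ in at least one local or aggregate attribute of $A$ (the strict-preference coordinate guaranteed by full dominance, which after the translation can be taken to lie among the skyline-relevant attributes), this strict preference carries over to $u^\star \Join v$ versus $u \Join v$. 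Hence $u \Join v$ is fully dominated inside $R$ by $u^\star \Join v$, and therefore cannot belong to the final skyline.

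The one step I expect to require the most care is the last transfer of strictness: one must argue that the strict coordinate of full dominance in $A$ actually survives in the joined relation's skyline-attribute space. For local and aggregate attributes of $A$ this is immediate; the subtle case is when the strict inequality sits on a join attribute, where I would appeal to the convention adopted in the paper (join attributes are ``incorporated within'' the local/aggregate attributes for dominance purposes) so that $u^\star \Join v$ still dominates $u \Join v$ in a skyline-relevant coordinate. Once this is settled, the theorem follows, and the symmetric argument on $B$ (and the composition of both) covers the remaining cases, completing the proof.
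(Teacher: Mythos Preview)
Your proposal is correct and follows essentially the same argument as the paper: pick a full dominator of the non-skyline operand, observe that the corresponding joined tuple exists (via the join-attribute encoding of Table~\ref{tab:joinsky}) and dominates the original joined tuple on local attributes by direct inheritance and on aggregate attributes by monotonicity of~$\oplus_i$. The paper's proof is terser---it does not spell out the join-preservation check across the five comparison operators, nor does it pause on the strictness-transfer point you flag---so your write-up is in fact more careful on both counts, but the route is the same.
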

\begin{proof}
	Consider a tuple $t' \in A_0 \Join B'_0$ formed by joining a tuple $u \in
	A_0$ with a tuple $v' \in B'_0$.  Since $v'$ is not a full skyline, there
	exists a tuple $v \in B_0$ that fully dominates $v'$.  Consider the tuple $t
	= u \Join v$.  The attributes in $l_1$ of $t$ are equal to those of $t'$,
	but dominate in $l_2$.  Consider an aggregate attribute $g'_i = g_{1_i}
	\oplus_i g'_{2_i}$ of $t'$.  The corresponding attribute value for $t$ is
	$g_i = g_{1_i} \oplus_i g_{2_i}$.  Since $g_{2_i}$ dominates $g'_{2_i}$ and
	$\oplus_i$ is a monotone aggregate function, $g_i$ dominates $g'_i$.  Hence,
	overall, the tuple $t$ dominates $t'$.  Thus, $t'$ cannot be part of the
	skyline.

	Similarly, any tuple in $A'_0 \Join B_0$ or $A'_0 \Join B'_0$ is dominated
	by tuples formed by joining the corresponding dominators, and will never
	exist in the final skyline set.
	\hfill{}
\end{proof}

As an example, consider flights 11 and 17.  Flight 11 fully dominates
flight 17 and satisfies the conditions for the join attributes as well.  This
ensures that any other flight joined with 17 (e.g., 21) can be joined with
11 as well, and the resulting joined tuple (11, 21) will surely dominate
(17, 21).  Hence, flight 17 need not be considered any further.  On the
other hand, even though flight 24 dominates flight 26 in the local and
aggregate attributes, the join attributes are not compatible as the sources of
the flights are different.  Hence, a tuple joined with 26 will not be
dominated by that joined by 24 as the latter tuple is invalid according to the
join criteria.

Thus, following the above theorem, the tuples from the sets $A'_0$ and $B'_0$ can be
discarded.  The remaining tuples may or may not exist in the final result set.
For example, consider flight 23 in the second relation.  It joins with three
tuples from the first relation as shown in Table~\ref{tab:example}(c).  Of
these, (11, 23) exists in the final skyline set while (13, 23) and
(15, 23) do not as they are dominated by (11, 23).

However, not all possible joined tuples from $A_0$ and $B_0$ need to be
examined.  Each full skyline set can be further divided by extracting the
\emph{local} skylines from them.  Suppose, the local skyline sets for $A_0$ and
$B_0$ be $A_1$ and $B_1$ respectively.  Correspondingly, let $A'_1$ and $B'_1$
be the set of non-skyline points within $A_0$ and $B_0$ respectively, i.e., they
are full skylines but not local skylines.  Mathematically, $A'_1 = A_0 - A_1$
and $B'_1 = B_0 - B_1$.  The following theorem shows that any tuple formed by
joining a tuple from either $A_1$ or $B_1$ or both \emph{must} be part of the
final skyline set.

\begin{table}[t]
	\centering
	\begin{tabular}{|c|c|c|c|}
		\hline
		\multicolumn{3}{|c|}{Set} & Flight numbers \\
		\hline
		\hline
		\multirow{3}{*}{$B_0$} & \multicolumn{2}{c|}{$B_1$} & 21, 22 \\
		\cline{2-4}
		& \multirow{2}{*}{$B'_1$} & $B_2$ & 23 \\
		\cline{3-4}
		& & $B'_2$ & 24, 25 \\
		\hline
		\multicolumn{3}{|c|}{$B'_0$} & 26, 27 \\
		\hline
	\end{tabular}
	\caption{Categorization of relation FlightsB from Table~\ref{tab:example}.}
	\label{tab:example_catb}
\end{table}

\begin{theorem}
	\label{thrm:local}
	The tuples formed by joining either or both of the tuples which are local
	skylines in the individual relations must exist in the final skyline set.
\end{theorem}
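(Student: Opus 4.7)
The plan is to argue by contradiction, paralleling the proof of Theorem~\ref{thrm:full} but now exploiting the stronger hypothesis that $u \in A_1$ (the case $v \in B_1$ is symmetric, and the ``both'' case reduces to either). Fix $t = u \Join v$ with $u \in A_1$ and $v \in B_0$, and assume some tuple $t' = u' \Join v'$ in the joined relation strictly dominates $t$. By Theorem~\ref{thrm:full} we may take $u' \in A_0$ and $v' \in B_0$, since any candidate dominator drawn from $A'_0$ or $B'_0$ is already ruled out.

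I would first pin $u'$ down on the $A$-side. Because $t' \succ t$, the tuple $u'$ must be at least as good as $u$ on every local attribute of $A$ (with join attributes incorporated per the convention). But $u \in A_1$ rules out any tuple of $A_0$ strictly locally dominating $u$, so $u'$ must agree with $u$ on every local and join coordinate of $A$. Combined with $u \in A_0$, which rules out any strict aggregate improvement by $u'$ (such an improvement, on top of matching locally, would produce a full dominator of $u$ inside $A_0$), this forces $u' = u$.

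With $u' = u$, the aggregate part of the dominance reads $g_{1_i}(u) \oplus_i g_{2_i}(v') \succeq g_{1_i}(u) \oplus_i g_{2_i}(v)$ for each aggregate coordinate $i$. Monotonicity of $\oplus_i$, applied as a cancellation, yields $g_{2_i}(v') \succeq g_{2_i}(v)$, so $v'$ dominates $v$ on every aggregate attribute; together with the local dominance of $v'$ over $v$ this gives dominance in every attribute of $B$. Since $u' = u$ contributes only ties to $t' \succ t$, the required strict inequality must reside on $v$'s side, hence $v'$ \emph{fully} dominates $v$. This contradicts $v \in B_0$, closing the argument.

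The hard step I expect is the cancellation. For sum the implication $a + x \succeq a + y \Rightarrow x \succeq y$ is immediate, but for only weakly monotone operators such as $\max$ and $\min$ this can fail, so one has to argue that any coordinate in which cancellation breaks merely gives a tie between $t'$ and $t$, forcing the strict inequality onto a coordinate where cancellation does go through. A related subtlety in the preceding step is that, without a genericity-style assumption that distinct tuples differ somewhere in their local or join attributes, two tuples of $A_0$ could share all local and join values yet be Pareto-incomparable in the aggregate coordinates, and handling that configuration cleanly would require an additional case analysis beyond the sketch above.
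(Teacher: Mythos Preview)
Your route is considerably more elaborate than the paper's, and most of the extra machinery is unnecessary. The paper's argument is essentially one line: since $u \in A_1$, no tuple of $A$ locally dominates $u$, hence for any joined tuple $t'$ the $l_1$-block of $t'$ cannot dominate the $l_1$-block of $t$; therefore $t'$ cannot dominate $t$ overall and $t$ is in the skyline. In particular the paper never needs to argue that $u' = u$, never touches the aggregate coordinates, never invokes cancellation in $\oplus_i$, and never uses the hypothesis $v \in B_0$. Your contradiction-via-$B_0$ strategy works (modulo the caveats you flag), but it takes a long detour to reach a conclusion that the paper obtains immediately from the $A$-side alone. The practical cost of your detour is exactly the two ``hard steps'' you worry about: the cancellation issue for non-strictly-monotone $\oplus_i$ and the need to force $u' = u$. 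Neither arises in the paper's argument.

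That said, the subtlety you raise at the end is real and is glossed over in the paper's proof as well. The paper concludes from ``no $u'$ locally dominates $u$'' that ``$t'$ cannot have local attributes of $A$ that dominate over $t$,'' and then jumps to ``$t$ must be part of the skyline.'' This ignores the possibility that $u'$ \emph{ties} $u$ on every local and join attribute while differing on the aggregate ones; in that case $t'$ could still dominate $t$ via the $l_2$ and aggregate blocks. One can in fact build such examples (two tuples in $A_1$ sharing all local/join values but Pareto-incomparable on two aggregate coordinates, paired with a matching configuration in $B$) where a tuple in $A_1 \Join B_1$ is dominated. So the theorem, read literally, needs a genericity assumption of the kind you mention; both the paper's short proof and your longer one are complete only under that assumption. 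Your instinct that something is missing is correct, but the fix is to state the distinctness hypothesis up front and then use the paper's one-line argument, not to push through the cancellation route.
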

\begin{proof}
	Consider a tuple $t \in A_1 \Join B'_1$ formed by joining a tuple $u \in
	A_1$ with a tuple $v' \in B'_1$.  Since $u$ is a local skyline, there
	exists no tuple $u' \in A$ that locally (and therefore, fully) dominates
	$u$.  Thus, for any other joined tuple $t' \in A \Join B$, $t'$ cannot have
	local attributes of $A$ that dominate over $t$.  Thus, $t$ must be part of
	the skyline.

	Similarly, any tuple in $A'_1 \Join B_1$ or $A_1 \Join B_1$ is not dominated
	by any other tuple in all the attributes, and will therefore, always exist
	in the final skyline set.
	\hfill{}
\end{proof}

Consider flight 11 in the first relation and 21 in the second relation.
Both are local skylines in the corresponding full skyline sets, i.e., they are
part of $A_0$ and $B_0$ respectively.  Any tuple joined with 11 (e.g., 23)
must be part of the final skyline as no other tuple can dominate (11, 23) in
the local attributes of the first relation, i.e., {\tt f1.amn} and {\tt f1.rtg}.

However, nothing can be concluded directly about the tuples formed by joining
$A'_1$ with $B'_1$---they may or may not exist in the ASJQ result set.  Though
their local attributes will be dominated, their aggregate attributes may be
better, and therefore, they may be part of the skyline.  Consider the joined
tuple (13, 23).  It is dominated by (11, 21) even in the aggregate
attributes, and is, hence, not a skyline.  On the other hand, the tuple (14,
24) is a skyline, even though 14 is locally dominated by 11 and 24 by
21; however, the aggregate attributes of (14, 24) are more preferable.  Hence,
the tuples in $A_1' \Join B_1'$ needs to be processed to determine the ASJQ
records in it.

The ASJQ algorithms utilize Theorem~\ref{thrm:full} and Theorem~\ref{thrm:local}
to reduce the processing by first determining the skyline sets before joining.

In addition to the high processing costs, the na\"ive algorithm suffers from the
problem of non-progressive result generation, i.e., it presents the
results only after complete processing of the algorithms.  In real applications
with large datasets, query processing may take a lot of time, and this large
response time, even for the first result, may be undesirable for many users.
This can be handled by devising \emph{online} algorithms that generate a subset
of the full results quickly and progressively generates the tuples thereafter.
Though the full results are still output only after complete processing, these
can be used in real-time applications.

MSC and the next set of algorithms achieve this by generating tuples that are
sure to be in the final skyline set \emph{without} processing all the tuples in
the joined relation.  

\subsection{Multiple Skyline Computations (MSC) Algorithm}
\label{subsec:msc}

\begin{algorithm}[t]
\caption{MSC Algorithm}
\label{alg:msc}
\begin{algorithmic}[1]
\REQUIRE Relations $A, B$, preferences $p$, aggregate operations $a$
\ENSURE Aggregate skyline join relation $S$
\STATE $A_0 \leftarrow$ computeFullSkyline$(A)$
\STATE $B_0 \leftarrow$ computeFullSkyline$(B)$
\STATE $(A_1,A_1') \leftarrow$ computeLocalSkyline$(A_0)$
\STATE $(B_1,B_1') \leftarrow$ computeLocalSkyline$(B_0)$
\STATE $J \leftarrow$ computeJoin$(A_1,B_1)$ $\cup$ computeJoin$(A_1,B_1')$ $\cup$ computeJoin$(A_1',B_1)$
\STATE $R \leftarrow$ Aggregate$(J,a)$
\STATE $J' \leftarrow$ computeJoin$(A_1',B_1')$
\STATE $R' \leftarrow$ Aggregate$(J',a)$
\STATE $S \leftarrow R \ \cup$ computeFullSkyline$(R',R)$
\quad /* finds skyline points in $R'$ by treating the current skyline as $R$ */
\end{algorithmic}
\end{algorithm}

The Multiple Skyline Computations (MSC) algorithm uses the results of the above
two theorems, and immediately outputs the tuples in $A_1 \Join B_1$, $A_1 \Join
B_1'$, and $A_1' \Join B_1$.  It then examines the tuples from $A_1' \Join B_1'$
to determine whether they are part of the final skyline set.
Algorithm~\ref{alg:msc} shows the complete algorithm.

Moreover, processing the joined relation, which is generally large,
constitutes most of the processing cost.  Hence, algorithms that reduce the
number of comparisons in the joined relation without processing the whole
relation improves the efficiency of ASJQ processing.

Table~\ref{tab:example_cata} and Table~\ref{tab:example_catb} respectively show
the division of the sets $A$ and $B$ from Table~\ref{tab:example} into the
different categories.  The na\"ive algorithm finds the skyline by examining $11$
joined tuples.  Theorem~\ref{thrm:full} reduces the number of joined tuples to
$6$ (as shown in Table~\ref{tab:example}(c)).  By applying
Theorem~\ref{thrm:local}, the MSC algorithm reduces it further by computing the
sets $A'_1$ and $B'_1$.  The total number of tuples in $A'_1 \Join B'_1$ on
which the final skyline needs to be computed is only $3$.

\subsubsection{Analysis}

We next analyze the costs of the MSC algorithm.  Using the same notation as in
the analysis of the na\"ive algorithm, the first two full skyline computations
has a cost of $S(N_A, j + m_1 + n) + S(N_B, j + m_2 + n)$, where $n_C$ denotes
the cardinality of the set $C$.  The cost of computing the local skylines next
are $S(N_{A_0}, m_1) + S(N_{B_0}, m_2)$.

The total cost of computing the three joins, $A_1 \Join B_1$, $A_1 \Join B_1'$,
and $A_1' \Join B_1$, is $J(A_1, B_1, n) + J(A_1, B'_1, n) + J(A'_1, B_1, n)$.
The full skyline operator is applied on the tuples from $A'_1 \Join B'_1$,
thereby incurring a cost of at most $S(N_{A'_1} . N_{B'_1}, m_1 + m_2 + n)$.

The MSC algorithm performs significantly better than the na\"ive one when the
cardinality of the full skyline set is low but that of the local skyline sets is
high.  A number of skyline tuples can be generated quickly and only a few ones
(those in $A'_1 \Join B'_1$) require a complete investigation.  Since the
skylines are computed locally, the number of local attributes plays a big role.
With more number of local attributes, the size of $A_1$ ($B_1$) grows.  However,
in that case, the cardinality of $A_0$ ($B_0$), and hence, that of $A'_1$
($B'_1$) will be large as well, thereby reducing some of the benefits of the MSC
algorithm.  Section~\ref{sec:exp} analyzes the effect of these parameters.

\begin{figure}[t]
\centering
\includegraphics[width=0.60\columnwidth]{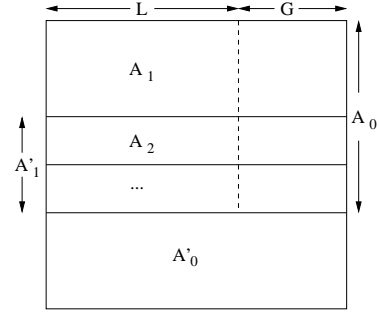}
\caption{Break-up of skyline sets for iterative algorithm.}
\label{fig:iter}
\end{figure}

\subsection{Dominator-Based Approach}
\label{subsec:dom}

In order to further reduce the processing cost of tuples from $A'_1 \Join B'_1$,
the following two algorithms are designed.  The first algorithm makes use of
\emph{dominator} properties among the tuples and prunes away unnecessary
comparisons while determining the ASJQ records within the set $A_1' \Join B_1'$.

Consider a tuple $t' \in A_1' \Join B_1'$ formed by joining tuples $u' \in A'_1$
and $v' \in B'_1$, i.e., $t' = u' \Join v'$.  The tuple $t'$ can be dominated by
only certain records of the skyline set $(A_1 \Join B_1) \cup (A'_1 \Join B_1)
\cup (A_1 \Join B'_1)$.  Identifying these records avoids comparing with the
whole sets.  Suppose, the local dominators of $u'$ ($v'$) are represented by
$ld(u')$ ($ld(v')$).  The following lemma proves that $t'$ can be dominated by
only those tuples $t$ that are in $ld(u') \Join ld(v')$, and nothing else.

\begin{lemma}
	\label{lem:dom}
	A tuple $t' = u' \Join v'$ in $A'_1 \Join B'_1$ can be dominated by only those
	tuples that are formed by joining tuples in the local dominator sets of $u'$
	and $v'$, i.e., in $ld(u') \Join ld(v')$. 
\end{lemma}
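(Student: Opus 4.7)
The plan is to argue the contrapositive: take an arbitrary tuple $t = u \Join v$ that dominates $t' = u' \Join v'$ in the full joined relation, and show that $u$ must sit in $ld(u')$ and $v$ in $ld(v')$. Once that is established, the union of all such dominators is contained in $ld(u') \Join ld(v')$, which is exactly what the lemma asserts.

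First, I would invoke Theorem~\ref{thrm:full} to restrict attention to candidate dominators $t = u \Join v$ with $u \in A_0$ and $v \in B_0$, since any tuple whose $A$- or $B$-component lies outside the full skyline cannot be part of the answer (and in particular cannot be the ``responsible'' dominator we need to track). Next, I would exploit the simple observation that dominance in the joined relation projects onto dominance in any subset of attributes: restricting $t$ and $t'$ to the attributes inherited from relation $A$ (that is, the local attributes $l_{1_\cdot}$ together with the join attributes $h_{1_\cdot}$, which per the paper's convention are now folded into the notion of local dominance), we see that $u$ is preferred or equal to $u'$ on every such attribute. The symmetric projection argument on the $B$-side yields the same for $v$ versus $v'$.

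The only remaining subtlety is the strictness requirement in the definition of dominance. Because $t \succ t'$, there is at least one attribute on which $t$ is strictly preferred. If this attribute belongs to the $A$-local+join block, then $u$ locally dominates $u'$ outright, i.e.\ $u \in ld(u')$, and for $v$ I would argue that weak preference on the $B$-local+join block combined with $v' \in B'_1$ (so $v'$ is itself already locally dominated by something in $B_1$) forces $v \in ld(v')$; the symmetric argument handles the case where the strict attribute lies in the $B$-block. The remaining case is when the strictness comes only from an aggregate attribute $g_i = g_{1_i} \oplus_i g_{2_i}$; here I would use monotonicity of $\oplus_i$ together with weak preference of $u$ over $u'$ and of $v$ over $v'$ to conclude that the strict improvement must already be witnessed on one of the constituent relations, reducing this case to one of the previous two.

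The main obstacle I anticipate is precisely this last step: cleanly decomposing a strict aggregate-attribute preference into strict preference on one of the two source relations without introducing a circular use of Theorem~\ref{thrm:local}, and making sure the argument respects the convention (adopted immediately after Definition~4) that join attributes are incorporated into local dominance. Once that bookkeeping is done, gathering the conclusions $u \in ld(u')$ and $v \in ld(v')$ into $t \in ld(u') \Join ld(v')$ is immediate, and the contrapositive yields the lemma.
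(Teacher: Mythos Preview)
Your overall strategy---argue the contrapositive by projecting a putative dominator $t = u \Join v$ onto the $l_1$ (resp.\ $l_2$) block and conclude that $u$ (resp.\ $v$) must locally dominate $u'$ (resp.\ $v'$)---is exactly the route the paper takes. The paper's proof, however, is a two-line argument: it simply observes that if $u \notin ld(u')$ then the $l_1$-attributes of $t$ do not dominate those of $t'$, whence $t \not\succ t'$; the symmetric observation for $v$ finishes. In particular the paper never invokes Theorem~\ref{thrm:full}, and it does not carry out any strictness case analysis.

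Where your proposal diverges from the paper, it also runs into trouble. Two steps in your strictness bookkeeping do not go through as stated. First, in the case where the strict coordinate lies in the $A$-block, you write that ``weak preference on the $B$-local+join block combined with $v' \in B'_1$ \ldots\ forces $v \in ld(v')$''. It does not: membership of $v'$ in $B'_1$ tells you only that \emph{some} tuple in $B_1$ locally dominates $v'$, not that $v$ does. If $v$ agrees with $v'$ on every $B$-local and join attribute, then $v \notin ld(v')$ regardless of what lives in $B_1$. Second, in the ``aggregate only'' case you appeal to ``monotonicity of $\oplus_i$ together with weak preference of $u$ over $u'$ and of $v$ over $v'$''. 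But the weak preference you have established is on the \emph{local+join} attributes only; you have no information at all about $g_{1_i}(u)$ versus $g_{1_i}(u')$, since the individual aggregate components do not appear as separate columns in the joined relation. So there is nothing for monotonicity to act on, and the reduction to the previous cases does not follow.

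In short: the projection idea is right and matches the paper, the detour through Theorem~\ref{thrm:full} is unnecessary, and the extra strictness machinery you propose contains genuine gaps. The paper sidesteps the whole issue by tacitly treating ``$u \notin ld(u')$'' as ``$u'$ is strictly better on some $l_1$-coordinate'', which immediately blocks $t \succ t'$; if you want to be more careful than the paper about the equality edge case, you will need a different argument than the two you sketch.
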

\begin{proof}
	Consider a tuple $t' = u' \Join v' \in A'_1 \Join B'_1$.  Also, consider $u$
	which is \emph{not} a local
	dominator of $u'$, i.e., $u \notin ld(u')$, and a tuple $t$ formed by
	joining $u$ with any $v \in B$.  The local attributes $l_1$ of
	$t'$ are not dominated by those in $t$ as then $u'$ would have been
	dominated by $u$.  Thus, $t$ cannot dominate $t'$.
	Similarly, any $t$ formed by joining any $u$ with $v \notin ld(v')$ cannot
	dominate $t'$ as the local attributes of the second relation will not be
	dominated.  Hence, if $t'$ can only be dominated by $t \in ld(u') \Join
	ld(v')$.
	\hfill{}
\end{proof}

\begin{algorithm}[t]
\caption{Dominator-Based Algorithm}
\label{alg:dom}
\begin{algorithmic}[1]
\REQUIRE Relations $A, B$, local preferences $l$, preferences $p$, aggregate operations $a$
\ENSURE Aggregate skyline join relation $S$
\STATE $A_0 \leftarrow$ computeFullSkyline$(A)$
\STATE $B_0 \leftarrow$ computeFullSkyline$(B)$
\STATE $(A_1,A_1') \leftarrow$ computeLocalSkyline$(A_0)$
\STATE $(B_1,B_1') \leftarrow$ computeLocalSkyline$(B_0)$
\STATE $(A_1,A_1',D_A) \leftarrow$ findLocalDominators$(A_0,l)$
\quad /* using Algorithm~\ref{alg:findingdom} */
\STATE $(B_1,B_1',D_B) \leftarrow$ findLocalDominators$(B_0,l)$
\quad /* using Algorithm~\ref{alg:findingdom} */
\STATE $J \leftarrow$ computeJoin$(A_1,B_1)$ $\cup$ computeJoin$(A_1,B_1')$ $\cup$ computeJoin$(A_1',B_1)$
\STATE $R \leftarrow$ Aggregate$(J,a)$
\STATE $J' \leftarrow$ computeJoin$(A_1',B_1')$
\STATE $R' \leftarrow$ Aggregate$(J',a)$
\STATE $S \leftarrow R \ \cup$ computeSkylineUsingDominators$(R',D_A,D_B)$
\quad /* finds skyline points in $R'$ by using dominator sets $D_A,D_B$ (Algorithm~\ref{alg:skylinedom}) */
\end{algorithmic}
\end{algorithm}

The records in $ld(u') \Join ld(v')$ are not guaranteed to dominate $t'$ though.
This is due to the fact that $u'$ contains aggregate attributes that are not
dominated by those of $ld(u')$ (the reason being $u'$ belonging to the set
$A_0$, i.e., it is a full skyline).  Hence, the tuple $t'$ may need to be
compared with all the tuples in $ld(u') \Join ld(v')$.  This reduces the
computation cost of the last step of the MSC algorithm significantly as it is
not compared with all tuples of $(A_1 \Join B_1) \cup (A'_1 \Join B_1) \cup (A_1
\Join B'_1)$.

However, the previous steps perform more work by finding the dominator sets for
each tuple not in the local skyline set.  In other words, by increasing the cost
of the MSC step to draw some conclusions among the records, the overall
computational cost is reduced by utilizing those properties in the latter stages
of the algorithm.

\begin{algorithm}[t]
\caption{Skyline Computation and Finding Dominators}
\label{alg:findingdom}
\begin{algorithmic}[1]
\REQUIRE Relation $A_0$, local preferences $p$
\ENSURE Skyline set $A_1$, Non-skyline set $A_1'$, Dominator set $D_1$
\WHILE{$r' \leftarrow$ readRecord$(A_0)$}
\STATE $flag \leftarrow 0$
\WHILE{$r \leftarrow$ readRecord$(A_0)$}
\IF{$r$ locally dominates $r'$ using preferences $p$}
\STATE $D(r') \leftarrow D(r') \cup r$
\STATE $flag \leftarrow 1$
\ENDIF
\ENDWHILE
\IF{flag $= 0$}
\STATE $A_1 \leftarrow A_1 \cup r'$
\ELSE
\STATE $A_1' \leftarrow A_1' \cup r'$
\STATE $D_1 \leftarrow D_1 \cup D(r')$
\ENDIF
\ENDWHILE
\STATE $S \leftarrow (A_1, A'_1, D_1)$
\end{algorithmic}
\end{algorithm}

Algorithm~\ref{alg:dom} summarizes the approach.  It uses
Algorithm~\ref{alg:findingdom} to find the local dominator sets for each record
that is in $A'_1$ (and $B'_1$).  Algorithm~\ref{alg:skylinedom} shows the
subroutine that utilizes these local dominator sets to determine whether a tuple
is in the final skyline set.

In the example in Table~\ref{tab:example}, flight 13 is locally dominated by
flights 11 and 12 while flight 23 is locally dominated by flights 21 and 22.
Therefore, to determine whether tuple (13, 23) is a skyline in the ASJQ set, it
needs to be checked only against (11, 21).  (The other combinations do not
generate valid joined tuples.) This is a large improvement as opposed to the MSC
algorithm that checks (13, 23) against $5$ joined tuples from $(A_1 \Join B_1)
\cup (A'_1 \Join B_1) \cup (A_1 \Join B'_1)$.

\begin{algorithm}[t]
\caption{Skyline Computation using Dominators}
\label{alg:skylinedom}
\begin{algorithmic}[1]
\REQUIRE Non-skyline set $R'$, Dominator sets $D_A, D_B$,
preferences $p$, aggregate operations $a$
\ENSURE Skyline set $R$
\WHILE{$r' \leftarrow$ readRecord$(R')$}
\STATE $r' \leftarrow u \Join v$
\STATE flag $\leftarrow 0$
\WHILE{$d_A \leftarrow$ readDominator$(r',D_A)$}
\STATE /* read record from $D_A$ that locally dominates $u$ */
\WHILE{$d_B \leftarrow$ readDominator$(r',D_B)$}
\STATE /* read record from $D_B$ that locally dominates $v$ */
\STATE $r \leftarrow$ Aggregate$(d_A \Join d_B, a)$
\quad /* read full record from $R$ that has $d_A$ and $d_B$ */
\IF{$r$ fully dominates $r'$ according to preferences $p$}
\STATE discard $r'$
\STATE flag $\leftarrow 1$
\STATE \textbf{break}
\ENDIF
\ENDWHILE
\IF{flag $= 1$}
\STATE \textbf{break}
\ENDIF
\ENDWHILE
\IF{flag $= 0$}
\STATE $R \leftarrow R \cup r'$
\ENDIF
\ENDWHILE
\end{algorithmic}
\end{algorithm}

\subsubsection{Analysis}

We now analyze the costs of the dominator-based algorithm with respect to the
MSC algorithm.  The first two full skyline computations has the same cost of
$S(N_A, m_1 + n) + S(N_B, m_2 + n)$.  The local skylines are computed next
having a total cost of $S(N_{A_0}, m_1) + S(N_{B_0}, m_2)$.

In addition to the skyline computations, the dominator sets are computed.
Denoting the cost of dominator computation by $D$, the cost is $D(N_{A_0}) +
D(N_{B_0})$.  Note that even though the dominators for only $A'_1$ and $B'_1$
tuples are maintained, all the tuples of $A_0$ and $B_0$ need to be processed.
Suppose, the size of the dominator sets are $d_{A'_1}$ and $d_{B'_1}$
respectively.

The skyline operator is next applied on the tuples from $A'_1 \Join B'_1$ using
the dominators found in the previous step.  This cost is at most $SD(N_{A'_1} .
N_{B'_1}, d_{A'_1} \times d_{B'_1}, n)$.  Note that the dimensionality of the
skyline operation using dominators here is only $n$, i.e., only the aggregate
attributes need to be checked for dominance, as the local attributes are, by
definition, dominated by the local dominators.

Finally, the total cost of computing the three other joins, $A_1 \Join B_1$,
$A_1 \Join B_1'$, and $A_1' \Join B_1$, is the same as that of the MSC
algorithm, and can be denoted by $J(A_1, B_1, n) + J(A_1, B'_1, n) + J(A'_1,
B_1, n)$.

The dominator-based algorithm thus performs well when the dominator sets are
small.  Otherwise, the overhead of dominator computation may be too large to
gain any speedup over the MSC algorithm.  Section~\ref{sec:exp} compares these
algorithms experimentally.

\subsection{Iterative Algorithm}
\label{subsec:iter}

The dominator-based algorithm involves computation of local dominator sets which
can be costly.  By eliminating the costly dominator computations, we devise
another algorithm which is iterative in nature and is an attractive online
algorithm.

The main cost of the MSC algorithm is the skyline computation on the join of the
two sets $A'_1$ and $B'_1$.  This algorithm reduces the complexity of this cost
by further dividing the set $A'_1$ ($B'_1$) into local skylines $A_2$ ($B_2$)
and non-skylines $A'_2$ ($B'_2$).  Iteratively, this is proceeded until the
cardinality of the non-skyline set is less than a preset threshold $\delta$.
The relation $A_0$ (similarly, $B_0$) is thus subdivided into $A_1, A_2, \dots,
A_k, A'_k$, as shown in Figure~\ref{fig:iter}.

By observing certain relationships among these sets, we can determine that the
dominators of the records of a set exist only in a few of the other sets, and it
needs to be compared only with those sets.  For example, a tuple in $A_2 \Join
B_2$ needs to be compared with tuples in $A_1 \Join B_1$ only, thereby
eliminating unnecessary comparisons with tuples in $(A_1 \Join B_1') \cup (A_1'
\Join B_1) \cup (A_1' \Join B_1')$.

\begin{lemma}
	\label{lem:target}
	A tuple in $A_2 \Join B_2$ can be dominated only by a tuple in $A_1 \Join
	B_1$ and not by any tuple in $(A_1 \Join B_1') \cup (A_1' \Join B_1) \cup
	(A_1' \Join B_1')$.
\end{lemma}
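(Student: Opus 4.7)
The plan is to combine Theorem~\ref{thrm:full}, Lemma~\ref{lem:dom}, and the nested structure of the sets $A_1 \subsetneq A_0$ and $A_2 \subseteq A_1'$ (with the $B$-analogues) to force any potential dominator of a tuple in $A_2 \Join B_2$ into $A_1 \Join B_1$.

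First, I would invoke Theorem~\ref{thrm:full} to restrict attention to dominators whose constituents lie in the full skylines: if $t = u \Join v$ with $u \in A_2 \subseteq A_0$ and $v \in B_2 \subseteq B_0$ is dominated by some $t^{*} = u^{*} \Join v^{*}$, then $u^{*}$ can be assumed to lie in $A_0$ and $v^{*}$ in $B_0$, because any constituent outside the full skyline is itself fully dominated, producing a still stronger dominator of $t$ whose pieces do lie in $A_0$ and $B_0$. Second, I would replay the argument of Lemma~\ref{lem:dom} (which only uses that $t$ is some joined tuple, not specifically that it lies in $A_1' \Join B_1'$) to conclude that $u^{*}$ must locally dominate $u$ in $R_1$ and $v^{*}$ must locally dominate $v$ in $R_2$, i.e.\ $u^{*} \in ld(u)$ and $v^{*} \in ld(v)$.

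Third, and this is the crux, I would show that $ld(u) \cap A_0 \subseteq A_1$ for every $u \in A_2$. The point is that $u \in A_2$ means $u$ is a local skyline of $A_1'$, so no tuple of $A_1'$ can locally dominate $u$; at the same time $u \in A_1'$ guarantees that \emph{some} tuple of $A_0$ does locally dominate $u$. These two facts together force every local dominator of $u$ in $A_0$ to lie in $A_0 \setminus A_1' = A_1$. A symmetric argument gives $ld(v) \cap B_0 \subseteq B_1$ for every $v \in B_2$. Combining the three steps yields $u^{*} \in A_1$ and $v^{*} \in B_1$, hence $t^{*} \in A_1 \Join B_1$, and in particular $t$ cannot be dominated by any tuple of $(A_1 \Join B_1') \cup (A_1' \Join B_1) \cup (A_1' \Join B_1')$.

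The main obstacle is the ``pushing up'' step in the third paragraph: one has to notice that the very definition of $A_2$ as the local skyline of $A_1'$ forbids local dominators of its elements from living inside $A_1'$, so the dominators are squeezed up into the preceding layer $A_1$. Once this observation is in place, the rest is a routine stitching together of Theorem~\ref{thrm:full}, Lemma~\ref{lem:dom}, and the inclusions $A_2 \subseteq A_0$, $B_2 \subseteq B_0$.
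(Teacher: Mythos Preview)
Your proof is correct and rests on the same key observation as the paper's: since $A_2$ is the local skyline of $A_1'$, no element of $A_1'$ can locally dominate an element of $A_2$, so any dominator's $A$-component must come from $A_1$ (and symmetrically for $B$). The paper argues this directly for each of the three excluded sets without the detour through Theorem~\ref{thrm:full} and Lemma~\ref{lem:dom}; your invocation of Theorem~\ref{thrm:full} is harmless but unnecessary, as the three sets in question already lie inside $A_0 \Join B_0$.
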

\begin{proof}
	Consider a tuple $t' = u' \Join v' \in A_2 \Join B_2$.  Consider any other
	tuple $t = u \Join v \in A'_1 \Join B_1$.  If $t$ dominates $t'$, then the
	$l_1$ local attributes of $t$ pertaining to $u$ must dominate that of $u'$.
	However, since $A_2$ is in the local skyline set of $A'_1$, this contradicts
	the fact that no tuple in $A'_1$ locally dominates a tuple in $A_2$.
	Similarly, no tuple in $A_1 \Join B'_1$ or $A'_1 \Join B'_1$ can dominate
	$t'$.
	\hfill{}
\end{proof}

For each such set $A_i \Join B_j$, there exists \emph{target sets}, within which
it has to search for its dominators and test for the ASJQ requisites.  We show
the target sets up to two iterations in Table~\ref{tab:iter_target}.

\begin{algorithm}[t]
\caption{Iterative Algorithm}
\label{alg:iter}
\begin{algorithmic}[1]
\REQUIRE Relations $A, B$, local preferences $l$, preferences $p$, aggregate operations $a$
\ENSURE Aggregate skyline join relation $S$
\STATE $A_0 \leftarrow$ computeFullSkyline$(A)$
\STATE $B_0 \leftarrow$ computeFullSkyline$(B)$
\STATE $i \leftarrow 1$
\WHILE{$|A'_i| \leq \delta$}
\STATE $A_{i+1} \leftarrow$ computeLocalSkyline$(A_i)$
\STATE $i \leftarrow i+1$
\ENDWHILE
\STATE $L_A \leftarrow i$
\quad /* Number of levels of skyline sets in $A$ */
\STATE $j \leftarrow 1$
\WHILE{$|B'_j| \leq \delta$}
\STATE $B_{j+1} \leftarrow$ computeLocalSkyline$(B_j)$
\STATE $j \leftarrow j+1$
\ENDWHILE
\STATE $L_B \leftarrow j$
\quad /* Number of levels of skyline sets in $B$ */
\STATE $J \leftarrow$ computeJoin$(A_1,B_1)$ $\cup$ computeJoin$(A_1,B_1')$ $\cup$ computeJoin$(A_1',B_1)$
\STATE $G \leftarrow$ Aggregate$(J,a)$
\STATE $S \leftarrow G$
\STATE $i \leftarrow 1,\ j \leftarrow 1$
\WHILE{$ i \leq L_A$}
\WHILE{$ j \leq L_B$}
\STATE $J_{ij}' \leftarrow$ computeJoin$(A'_i,B'_j)$
\STATE $G_{ij}' \leftarrow$ Aggregate$(J_{ij}',a)$
\STATE $S \leftarrow S \ \cup$ computeSkylineUsingTargetSets$(G'_{ij})$
\STATE $j \leftarrow j+1$
\ENDWHILE
\STATE $i \leftarrow i+1$
\ENDWHILE
\end{algorithmic}
\end{algorithm}

The iterative algorithm is summarized in Algorithm~\ref{alg:iter}.  In each
relation, the skyline sets are computed till the threshold $\delta$ is reached.
All combinations of such non-skyline sets are then joined, and the dominators
for aggregates are checked only against the corresponding target sets.

\begin{table}[t]
\centering
\begin{tabular}{|c|l|} \hline
	Set & \multicolumn{1}{c|}{Target Sets} \\
\hline
\hline
$A_2\Join B_2$& $A_1\Join B_1$ \\ \hline
$A_2\Join B_2'$& $A_1\Join B_1$, $A_1\Join B_1'$ \\ \hline
$A_2'\Join B_2$& $A_1\Join B_1$, $A_1'\Join B_1$ \\ \hline
$A_2'\Join B_2'$& $A_1\Join B_1$, $A_1\Join B_1'$, $A_1'\Join B_1$\\ \hline
\end{tabular}
\caption{Target sets for iterative algorithm.}
\label{tab:iter_target}
\end{table}

The computeSkylineUsingTargetSets method mentioned in the algorithm determines
the skyline records in the set $S_{ij}$ by comparing only with the target sets
corresponding to it as shown in Table~\ref{tab:iter_target}.  The first
iteration of the iterative algorithm remains the same as in the MSC algorithm.
In the second iteration, the sets $A_2$ and $B_2$ are joined and these are
compared with only the target sets shown in Table~\ref{tab:iter_target}.
Similarly, in the next iteration, local skyline is further computed in $A_2'$
and $B_2'$, and so on until the cardinality falls below the threshold $\delta$.

For the running example given in Table~\ref{tab:example}, the break-up of the
relations into the different sets $A_1, A_2$, etc. are shown in
Table~\ref{tab:example_cata} and Table~\ref{tab:example_catb}.  Here, $A_2'$ and
$B_2'$ are not further categorized, as they have only two tuples, and no tuple
dominates the other.  In other words, $A_3=A_2'$ and $B_3=B_2'$, and the sets
$A'_3$ and $B'_3$ are empty.  Hence, this is considered as the last iteration.

\subsubsection{Analysis}

The cost analysis of the iterative algorithm depends heavily on the cardinality
of the non-skyline sets produced progressively.  The number of tuples that are
joined remains the same as in the MSC approach.  However, the ASJQ
computation cost for the tuples in $A_i'\Join B_j'$ reduces significantly,
since the search space for each tuple is iteratively pruned, and is thus,
optimized.

As a result, it performs significantly better in comparison to the other
algorithms for datasets with large non-(full)skyline sets.  This is due to the
fact that the non-skyline sets are not blindly joined with each other, but
rather only the relevant records are joined and compared in a progressive
manner.  This cuts down many unnecessary skyline tests, thereby improving the
efficiency.

\subsection{ASJQ with Single Aggregate Attribute}
\label{subsec:aggr}

A special case of the \emph{Aggregate Skyline Join Query} is when it involves
only a single aggregate attribute.  The processing then becomes substantially
easier.  As shown in Section~\ref{subsec:joinattr}, the records which do not
exist in the \emph{full skyline} set of each relation (i.e., those in $A'_0$ and
$B'_0$) are discarded.  However, when the number of aggregate attributes is one,
even the tuples formed by joining $A_0$ with $B_0$ do not need to be examined.
An interesting observation, summarized in the following lemma, leads to the
expeditious generation of the skyline points.  The tuples in $A_0 \Join B_0$ are
guaranteed to be part of the final skyline set.

\begin{lemma}
	\label{lem:aggr}
	When there is only one aggregate attribute, the tuples formed by joining the
	full skyline points of each relation always exist in the ASJQ result set.
\end{lemma}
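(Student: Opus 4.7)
The plan is to argue by contradiction: assume some $t = u \Join v$ with $u \in A_0$ and $v \in B_0$ is dominated in the final joined relation by another tuple $t' = u' \Join v'$, and derive a contradiction using the full-skyline property of $u$ and $v$ together with the monotonicity of the (single) aggregation operator $\oplus$.

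First I would unpack what it means for $t'$ to dominate $t$. Since the local (including join) attributes of $t$ split cleanly into those coming from $u$ and those coming from $v$, the dominance relation forces $u' \succeq u$ on the $l_1,h_1$ coordinates and $v' \succeq v$ on the $l_2,h_2$ coordinates, plus the aggregate condition $g(u') \oplus g(v') \succeq g(u) \oplus g(v)$ with overall strictness somewhere. Next, because $u \in A_0$ is a full skyline, $u'$ cannot fully dominate $u$; combined with $u' \succeq u$ on local + join, and using the fact that there is only \emph{one} aggregate attribute (so the aggregate comparison is a total preorder on a single coordinate), this dichotomizes: either $u' = u$ on all of $l_1, h_1, g_1$, or $g(u) \succ g(u')$ strictly. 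The identical dichotomy holds for the pair $(v, v')$.

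Then I would do a short case analysis on the four combinations. The case where both pairs are equal yields $t' = t$, contradicting strict dominance. If $u' = u$ everywhere but $g(v) \succ g(v')$ strictly, then monotonicity of $\oplus$ (strict in one argument, equal in the other) gives $g(u') \oplus g(v') \prec g(u) \oplus g(v)$, contradicting $t' \succeq t$ on the aggregate; the symmetric case is handled the same way. Finally, if $g(u) \succ g(u')$ and $g(v) \succ g(v')$ both strictly, monotonicity again yields $g(u') \oplus g(v') \prec g(u) \oplus g(v)$, another contradiction.

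The main obstacle, and the place where single-attribute aggregate is really used, is the dichotomy step: with one aggregate coordinate, ``$u'$ not fully dominating $u$'' plus ``$u' \succeq u$ on local+join'' collapses to exactly two simple subcases, so every subsequent case can be closed by a single application of monotonicity. With multiple aggregate attributes this structure breaks, because $u'$ could fail to fully dominate $u$ via one aggregate coordinate while being strictly better on another, and those gains and losses could compensate across the join; so the argument must be tailored to $n = 1$, which I would flag explicitly after the proof.
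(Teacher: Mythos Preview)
Your proof is correct and follows essentially the same contradiction argument as the paper: both use the full-skyline property of $u$ and $v$ together with the single aggregate coordinate to force the aggregate values of $u,v$ to be at least as good as those of $u',v'$, and then invoke monotonicity of $\oplus$ to contradict $t' \succ t$ on the aggregate. The only cosmetic difference is that the paper first peels off the cases involving $A_1$ or $B_1$ via Theorem~\ref{thrm:local} and argues the contradiction only for $A'_1 \Join B'_1$, whereas you treat all of $A_0 \Join B_0$ uniformly; your explicit dichotomy and four-case split is in fact slightly more careful than the paper's handling of the equal-versus-strict distinction.
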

\begin{proof}
	Consider the set $A_0$ ($B_0$) to be divided it into local skyline set $A_1$
	($B_1$) and non-skyline records $A'_1$ ($B'_1$).  Using
	Theorem~\ref{thrm:local}, the tuples in $A_1 \Join B_1$ exist in the final
	skyline set.
	
	Consider a tuple $t' = u' \Join v' \in A_1' \Join B_1'$.  We claim that
	there does not exist any tuple $t = u \Join v$ that dominates $t'$ fully.
	To counter the claim, assume that such a tuple $t$ exists.  Since $t$
	dominates $t'$, the local attributes of $t$ must dominate those in $t'$.
	Thus, $u \succ u'$ and $v \succ v'$.  Next, consider the aggregate attribute
	of $t'$, expressed as $g_{t'} = g_{u'} \oplus g_{v'}$.  Note that since $u'$
	is a full skyline record, no tuple and in particular $u$, can dominate $u'$
	in all the attributes.  That is to say, $u'$ must dominate $u$ in the
	aggregate attribute, since it is being dominated in all the other (local)
	attributes, i.e., $g_{u'}$ dominates $g_u$.  Similarly, $g_{v'}$ dominates
	$g_v$.  Since the aggregate function $\oplus$ is a \emph{monotone} function,
	$g_{t'} = g_{u'} \oplus g_{v'}$ dominates $g_t = g_u \oplus g_v$.
	Therefore, the claim that $t$ dominates $t'$ fully is false.  Consequently,
	the tuple $t'$ must be in the final skyline set.

	Similarly, any tuple in $(A'_1 \Join B_1) \cup (A_1 \Join B'_1)$ must also
	be a skyline record.  Together, all the tuples in $A_0 \Join B_0$ exist in
	the ASJQ result set.
	\hfill{}
\end{proof}

Therefore, when there is only one aggregate attribute, an algorithm that divides
the full skyline sets into local skylines and non-skylines, and returns the join
of the two local skyline sets as the final ASJQ result, is the \emph{optimal}
algorithm.

\section{Experimental Evaluation}
\label{sec:exp}

\begin{table}[t]
\centering
\begin{tabular}{|c|c|c|}
\hline
Parameter & Symbol & Value \\
\hline
\hline
Number of local attributes & $L$ & 2 \\
Number of aggregate attributes & $G$ & 2 \\
Cardinality of datasets & $N$ & 40000 \\
Number of categories & $C$ & 10 \\
Distribution of datasets & $D$ & Correlated \\
\hline
\end{tabular}
\caption{Default parameters for synthetic data.}
\label{tab:default}
\end{table}

In this section, we evaluate the ASJQ algorithms experimentally.  We implemented
them in Java on an Intel Core2Duo 2GHz machine with 2GB RAM in Linux
environment.  We used the synthetic dataset generator given in
\url{http://www.pgfoundry.org/projects/randdataset/} and used in~\cite{656550}.
We also used a real dataset of statistics of basketball players obtained from
\url{http://www.databasebasketball.com/}.  For the skyline algorithm, we
employed the SFS method~\cite{1260846}\footnote{The choice of SFS versus other
algorithms such as LESS~\cite{1083622} does not matter as the focus is on the
join and not the skyline computation.}, and used hash-join~\cite{korth} for
implementing the join.

We analyze the execution times of the four algorithms: (1)~Na\"ive, (2)~MSC,
(3)~Dominator-based, and (4)~Iterative, based on the following parameters:
(i)~number of local attributes ($L$), (ii)~number of aggregate attributes ($G$),
(iii)~cardinality of datasets ($N$), (iv)~number of categories in each relation
for joining attribute assuming equi-join ($C$), and (v)~distribution of datasets
($D$).  Unless mentioned otherwise, the default settings of the five parameters
for experiments with the synthetic data are given in Table~\ref{tab:default}.

\begin{table}[t]
\centering
\begin{tabular}{|c|c|c|c|c|c|}
\hline
 & $N$ & $D$ & $L$ & $G$ & $C$ \\
\hline
\hline
Setting 1 &10000&Correlated&2&3&10 \\
Setting 2 &10000&Correlated&3&2&10 \\
Setting 3 &3162&Independent&2&2&10 \\
Setting 4 &316&Anti-Correlated&1&2&10 \\
Setting 5 &316&Independent&2&1&10 \\
\hline
\end{tabular}
\caption{Experimental settings.}
\label{tab:settings}
\end{table}

\subsection{Performance of the na\"ive algorithm}

The first experiment examines the difference in performance of the na\"ive with
the other ASJQ algorithms.  We use five random settings of synthetic datasets as
shown in Table~\ref{tab:settings}.  The plots in Figure~\ref{fig:settings}
compare the execution times of the different algorithms.  The join condition is
an equi-join on a single attribute.

For all the five settings, the na\"ive algorithm requires much higher running
times.  Further, while the performance of the other algorithms depends on the
final cardinality of the ASJQ result set and is proportional to it, the na\"ive
algorithm is more or less independent of the final cardinality.  This is due to
the fact that it spends most of the time in computing the join of the relations
and then applies the skyline operator on the large joined relation.

Due to the large gap in the running times, we conclude that the na\"ive
algorithm is not practical in comparison to the other algorithms.  Consequently,
we do not report the results of the na\"ive algorithm any further.

\begin{figure}[t]
\begin{center}
\begin{tabular}{cc}
	\hspace*{-10mm}
	\includegraphics[width=\figwidth]{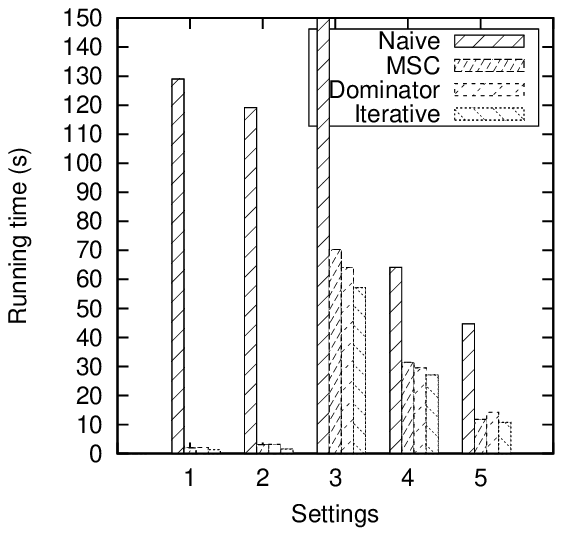} &
	\hspace*{-18mm}
	\includegraphics[width=\figwidth]{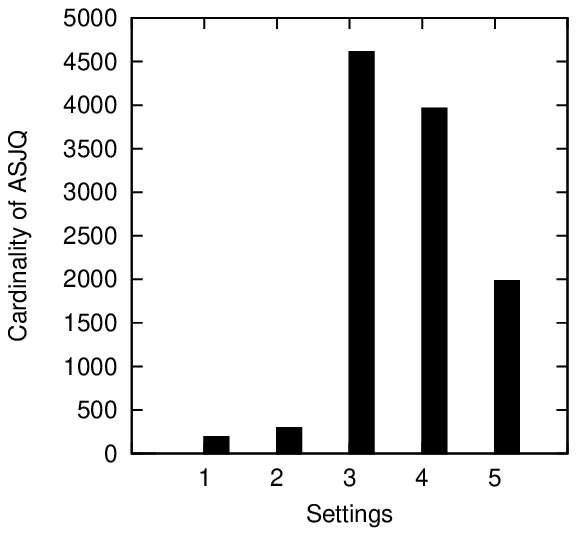} \\
	\hspace*{-10mm}
(a) Running time. &
	\hspace*{-18mm}
(b) Cardinality of ASJQ.
\end{tabular}
\end{center}
\caption{Comparison with na\"ive algorithm.}
\label{fig:settings}
\end{figure}

\subsection{Effect of dimensionality}
\label{subsec:dim_exp}

The first experiment measures the effect of the number of local attributes ($L$)
on the algorithms.  Figure~\ref{fig:local}(a) shows that the running time
increases sharply when $L$ increases.  This can be attributed to the fact that
the cardinality of the ASJQ result set increases almost exponentially
(Figure~\ref{fig:local}(b)).  As the dimensionality of the datasets (i.e., the
number of attributes) increases, the probability of a tuple being dominated in
all the attributes decreases, thereby sharply increasing the number of skyline
records.

The iterative algorithm shows the best scalability since it processes the
skyline sets progressively.  At lower dimensions, the time to find the full
skyline sets in the individual relations is the dominating factor of the overall
time, and hence, there is little difference between the algorithms.

\begin{figure}[t]
\begin{center}
\begin{tabular}{cc}
	\hspace*{-10mm}
\includegraphics[width=\figwidth]{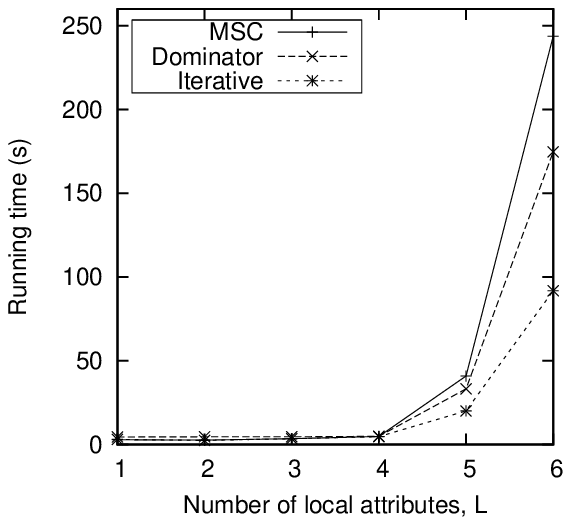} &
	\hspace*{-18mm}
\includegraphics[width=\figwidth]{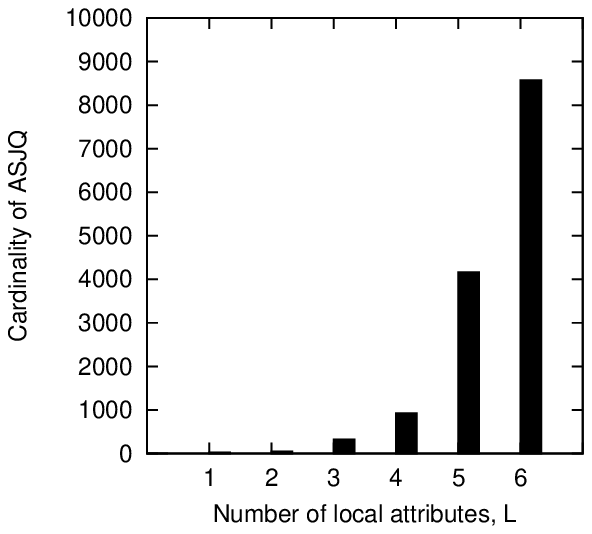} \\
	\hspace*{-10mm}
(a) Running time. &
	\hspace*{-18mm}
(b) Cardinality of ASJQ.
\end{tabular}
\end{center}
\caption{Effect of number of local attributes.}
\label{fig:local}
\end{figure}

Figure~\ref{fig:aggr}(a) and Figure~\ref{fig:aggr}(b) show similar trends.
Interestingly, the absolute times are much lower than the corresponding number
of local attributes.  Incrementing the number of local attributes increases the
dimensionality in the joined relation by two, whereas it only increases by one
for aggregate attributes.  Thus, the effect of dimensionality is less
pronounced.  Consequently, the cardinality of the final ASJQ set is less.

The MSC algorithm performs better than the dominator-based algorithm since the
number of local attributes is small and the local dominator sets are larger.
Consequently, the overhead of dominator computation and comparison offsets the
advantages.

\begin{figure}[t]
\begin{center}
\begin{tabular}{cc}
	\hspace*{-10mm}
\includegraphics[width=\figwidth]{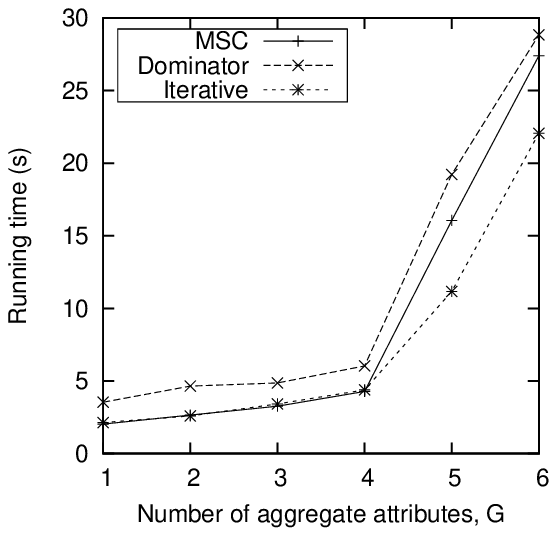} &
	\hspace*{-18mm}
\includegraphics[width=\figwidth]{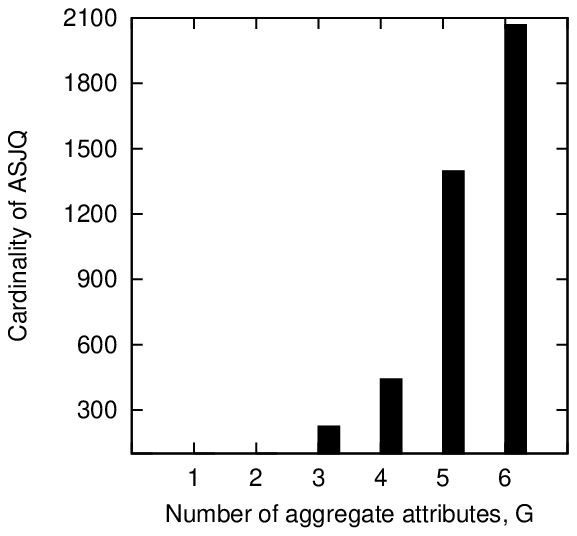} \\
	\hspace*{-10mm}
(a) Running time. &
	\hspace*{-18mm}
(b) Cardinality of ASJQ.
\end{tabular}
\end{center}
\caption{Effect of number of aggregate attributes.}
\label{fig:aggr}
\end{figure}

\subsection{Effect of dataset cardinality}
\label{subsec:card_exp}

The next experiment measures the effect of the cardinality of the individual
relations on ASJQ processing.  The cardinality of the joined relation increases
\emph{quadratically} with the individual cardinality, assuming that the data
distribution remains the same.  For example, assume two datasets with $N = 10^4$
tuples each.  If an equi-join condition is used where the number of categories
of the joining attributes is assumed to be $10$, each category has on an average
$10^3$ tuples.  Hence, the total cardinality of the joined relation becomes $10
\times (10^3)^2 = 10^7$.

Figure~\ref{fig:card}, however, shows that the cardinality of the ASJQ result
set does not increase quadratically.  (The figure reports results for $4$ local
and $4$ aggregate attributes.  The cardinality and the running time for $L=2$
and $G=2$ were too low.)  The number of skyline records depends more on other
parameters of the dataset, such as dimensionality and distribution.
Consequently, the scalability of the ASJQ algorithms with $N$ is better.

\begin{figure}[t]
\begin{center}
\begin{tabular}{cc}
	\hspace*{-10mm}
\includegraphics[width=\figwidth]{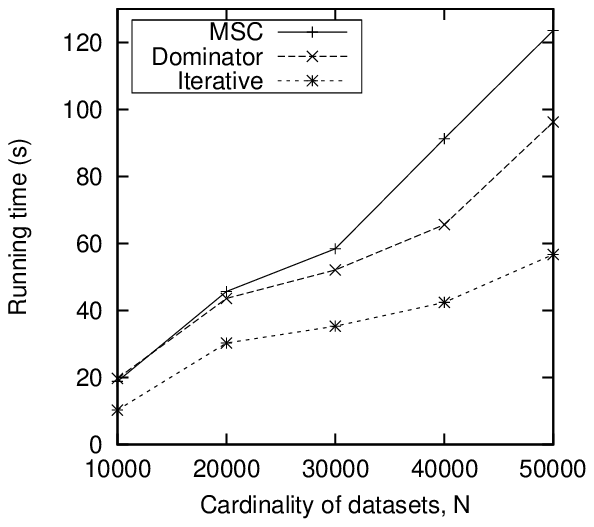} &
	\hspace*{-18mm}
\includegraphics[width=\figwidth]{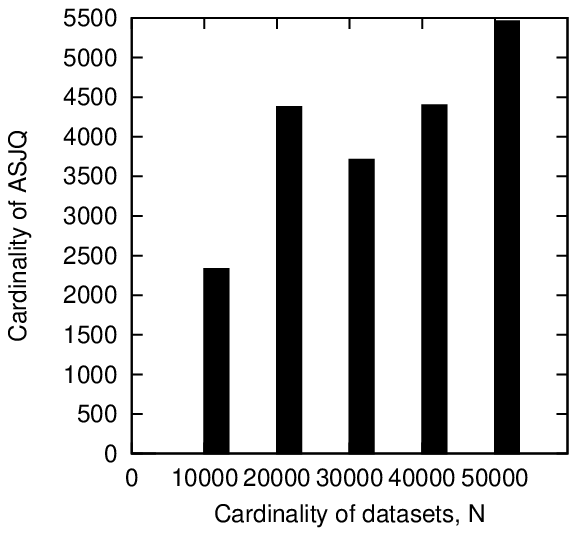} \\
	\hspace*{-10mm}
(a) Running time. &
	\hspace*{-18mm}
(b) Cardinality of ASJQ.
\end{tabular}
\end{center}
\caption{Effect of dataset cardinality.}
\label{fig:card}
\end{figure}

\subsection{Effect of dataset distribution}
\label{subsec:dist_exp}

\begin{figure}[t]
\begin{center}
\begin{tabular}{cc}
	\hspace*{-10mm}
\includegraphics[width=\figwidth]{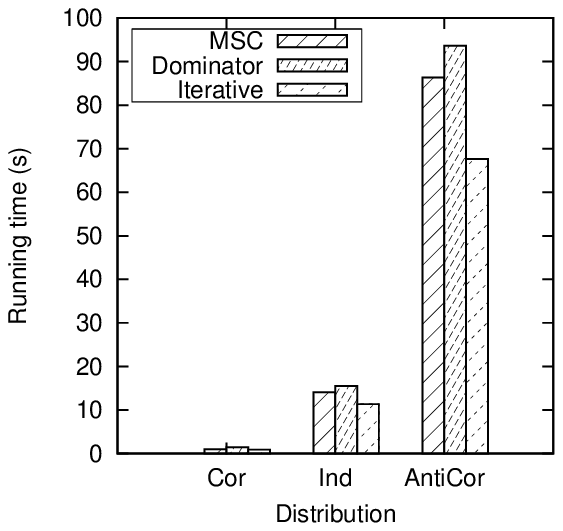} &
	\hspace*{-18mm}
\includegraphics[width=\figwidth]{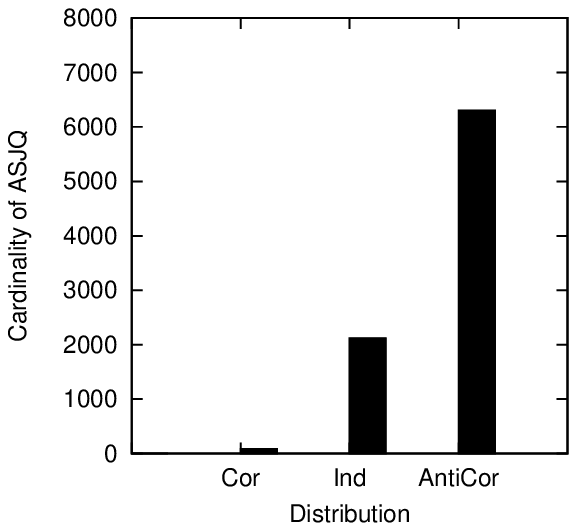} \\
	\hspace*{-10mm}
(a) Running time. &
	\hspace*{-18mm}
(b) Cardinality of ASJQ.
\end{tabular}
\end{center}
\caption{Effect of dataset distribution.}
\label{fig:distr}
\end{figure}

We measured the effect of three standard data distributions---correlated,
independent, and anti-correlated---on the ASJQ algorithms.  The results are
shown in Figure~\ref{fig:distr}.  The cardinality for the correlated dataset is
very small, while that for the anti-correlated dataset is quite large.  In a
perfectly correlated dataset, there is only one skyline record, which dominates
all other records.  In a perfectly anti-correlated dataset, every record is in
the skyline set.  The independent dataset is mid-way, and the cardinality
depends on the dimensionality.  This behavior is reflected in the results.

For the correlated and the independent datasets, the running times of the three
algorithms are similar, while for the anti-correlated dataset, the iterative
algorithm shows an advantage, as it processes the large dominator sets
progressively by only comparing it with certain target sets.

\subsection{Effect of number of categories of join attribute}

\begin{figure}[t]
\begin{center}
\begin{tabular}{cc}
	\hspace*{-10mm}
\includegraphics[width=\figwidth]{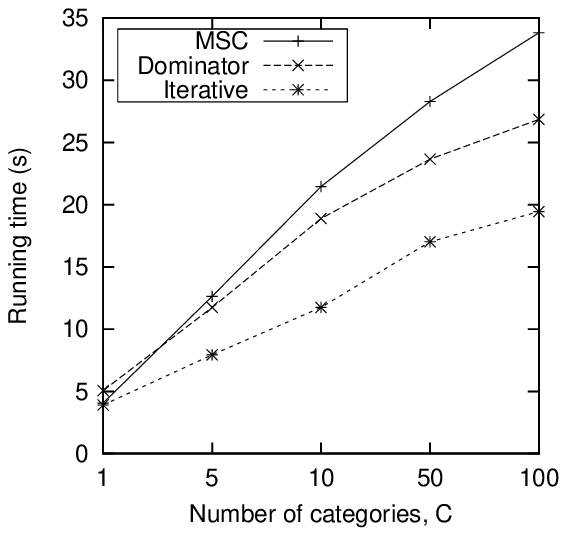} &
	\hspace*{-18mm}
\includegraphics[width=\figwidth]{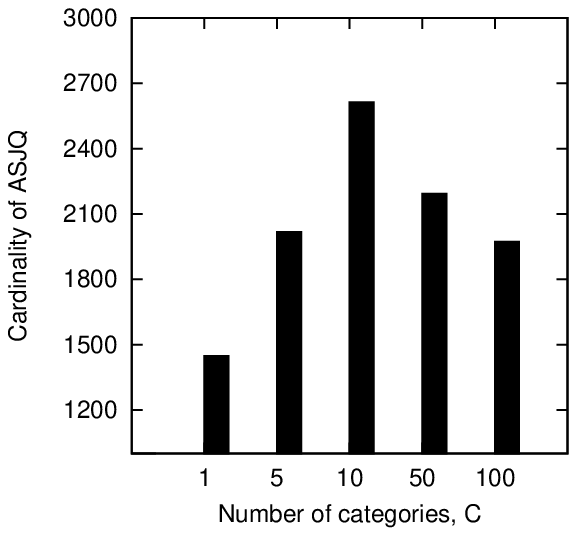} \\
	\hspace*{-10mm}
(a) Running time. &
	\hspace*{-18mm}
(b) Cardinality of ASJQ.
\end{tabular}
\end{center}
\caption{Effect of number of categories of join attribute.}
\label{fig:cat}
\end{figure}

The final experiment on synthetic data measures the effect of the number of
categories of the join attribute.  We assume that only one attribute used for
joining the two relations, and the join condition is an equi-join.  The number
of categories signifies the possible values of the join attribute.

For datasets with cardinality $N$ and number of categories $C$, assuming an
uniform distribution of the join attribute, the total cardinality of the joined
relation is $C \times (N / C)^2 = N^2 / C$.  Hence, as $C$ increases, the
cardinality decreases.  When $C = 1$, the join degenerates to a Cartesian
product of the two relations with $N^2$ tuples.

The cardinality of the ASJQ, however, does not decrease with $C$.  As shown in
Figure~\ref{fig:cat}(b), it attains a maximum in the middle.  When $C$ is low,
even though the number of tuples is high, the chance of a tuple dominating
others is higher as the join attribute is same for more number of tuples.  At
higher values of $C$, the number of joined tuples becomes small, leading to
lower ASJQ cardinality.

Figure~\ref{fig:cat}(a) shows that regardless of the cardinality, the running
time increases with increasing $C$.  When $C$ is more, the initial full skyline
sets ($A_0$ and $B_0$) are larger as there is less probability of a tuple
matching another tuple in the join attribute, and therefore, dominating it.
Consequently, the latter stages of the algorithm are affected and the running
time increases.

\subsection{Real Datasets}
\label{subsec:real}

In this section, we evaluate the performance of the ASJQ algorithms for a real
dataset.  The real dataset consists of the statistics of basketball players
obtained from \url{http://www.databasebasketball.com/}.  The cardinality of the
dataset was $N=10^4$ with $3$ local attributes ($L=3$) and $2$ aggregate
attributes ($G=2$).  We performed a self-join of the dataset with the join
condition as equality.

Four settings were created by varying the number of join attributes.  In setting
1, \emph{year} was used as the join attribute, while in setting 2, the dataset
was joined on the \emph{team}.  For setting 3, no join attribute was used, which
corresponds to the Cartesian product of the relations.  Setting 4 used both the
attributes for joining.

The results are summarized in Figure~\ref{fig:real}.  The cardinality of the
final ASJQ result set was the highest when no join attribute was used (setting
3) and was lowest when both the attributes were used (setting 4).  The running
times reflected the trends of the cardinalities.  The iterative algorithm
performed the best, followed by the dominator-based approach.  The MSC algorithm
was the slowest.  The strategy of the iterative algorithm to prune progressively
proved to be the best.

\begin{figure}[t]
\begin{center}
\begin{tabular}{cc}
	\hspace*{-10mm}
\includegraphics[width=\figwidth]{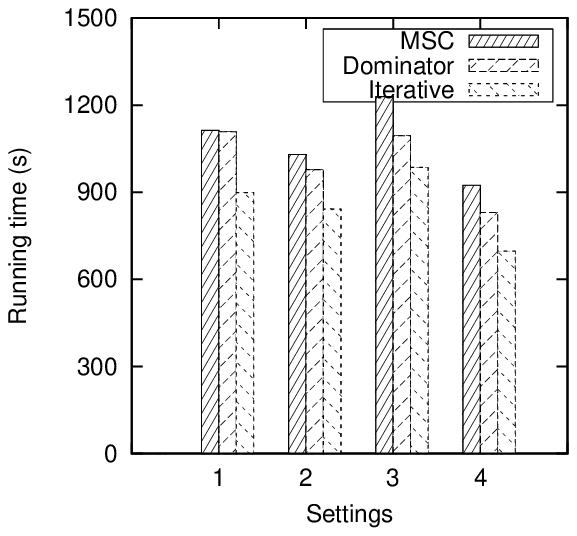} &
	\hspace*{-18mm}
\includegraphics[width=\figwidth]{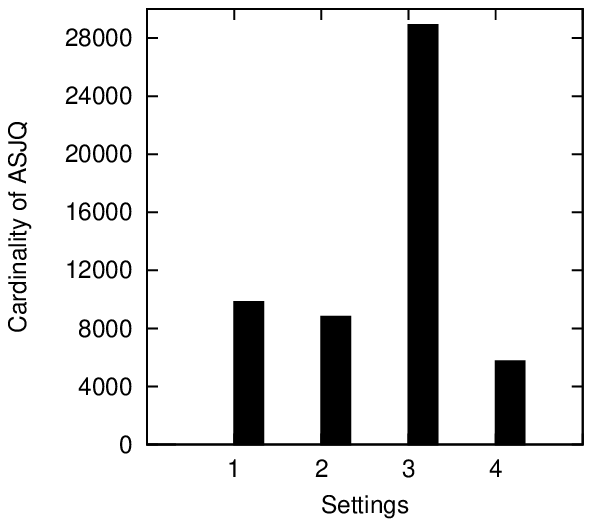} \\
	\hspace*{-10mm}
(a) Running time. &
	\hspace*{-18mm}
(b) Cardinality of ASJQ.
\end{tabular}
\end{center}
\caption{Real datasets.}
\label{fig:real}
\end{figure}

\section{Conclusions}
\label{sec:concl}

In this paper, we have proposed a novel query, the \textsc{Aggregate Skyline
Join Query (ASJQ)}.  This extends the general skyline operator to multiple
relations involving joins using aggregate operations over attributes from
different relations.  The ASJQ processing is explained with the MSC approach,
dominator-based approach and the iterative approach, in addition to the na\"ive
algorithm.  Extensive experiments confirm that our algorithms perform well with
real datasets, and also scale nicely with dimensionality and cardinality of the
relations.  In future, we would like to extend ASJQ to distributed environments
and devise parallel algorithms to process the queries more efficiently.

{
\bibliographystyle{abbrv}
\balance
\bibliography{refer}
}

\end{document}